\documentclass[11pt]{article}
\usepackage[utf8]{inputenc}
\usepackage[T1]{fontenc}
\usepackage{lmodern}
\usepackage[margin=1.1in]{geometry}
\usepackage{amsmath,amssymb,amsthm}
\usepackage{booktabs}
\usepackage{graphicx}
\usepackage{microtype}
\usepackage[hidelinks]{hyperref}
\theoremstyle{plain}
\newtheorem{theorem}{Theorem}
\newtheorem{proposition}[theorem]{Proposition}

\title{\bf Prefix Sharing Is a Sorting Problem}
\author{Rong He}
\date{\today}

\begin{document}
\maketitle
\begin{center}\small
Code, data pipeline, and verification scripts:\\
\url{https://github.com/Darrenus/prefix-sharing-is-sorting}
\end{center}
\begin{abstract}
LLM serving reuses KV cache by exact prefix match, so when a prompt is assembled from a \emph{set} of reusable pieces --- retrieved passages, tool definitions, few-shot exemplars --- the order chosen for those pieces determines how much computation can be shared. Every deployed system fixes that order by a single global convention. We prove this is optimal only when requests contain at most two pieces, and asymptotically wrong in general. Our main result is a structure theorem: the minimum prefix-trie cost equals $\min_H \sum_x w(x) t_x(H)$ over binary hierarchies $H$ on the \emph{requests}, where $t_x(H)$ is the canonical decomposition size of the set of requests needing chunk $x$. Choosing chunk orders is therefore equivalent to choosing one hierarchy over requests. The identity yields an $O(3^m)$ exact algorithm, identifies the two-chunk case as minimum vertex cover, and shows that on the leave-one-out family the optimum is the minimum external path length of a binary tree --- the merge-sort recursion --- so a global order pays $\Theta(n^2)$ against a true cost of $\Theta(n\log n)$. Agglomerative clustering by common intersection is a tight $\tfrac12$-approximation for the achievable saving. On BM25 retrieval traces over three BEIR corpora the resulting layout reduces prefill by 17--36\% against production RAG ordering, and the margin widens with retrieval depth as the theory predicts. Serving requests in the hierarchy's DFS order finally lets a cache holding one request's context attain the unbounded-cache optimum exactly, so cache capacity and reorder window act as substitutes.
\end{abstract}
\section{Introduction}

Every production LLM serving stack --- vLLM \cite{KWO23}, SGLang \cite{SGL}, TensorRT-LLM, and the prompt-caching APIs of the commercial providers --- reuses KV cache by \emph{exact prefix match}. Two requests share computation up to the first token where they disagree, and not one token further.

That rule is cheap to implement and it is the reason prefix caching works at all. It also quietly creates an optimisation problem that no system solves globally, because in a growing fraction of real traffic the prompt is not a sentence a user typed. It is a \textbf{set} of reusable pieces that the system itself assembled:

\begin{itemize}
\item retrieved passages in RAG,
\item tool / function definitions in an agent,
\item few-shot exemplars,
\item memory or scratchpad items carried between turns,
\item system-policy fragments composed per tenant.
\end{itemize}

The pieces are a set. The prompt is a sequence. Something has to choose the order, and that choice is \emph{free} --- the model sees the same evidence either way. Today every system makes it with a single global convention: retrieval rank, document id, popularity, insertion time. One order, applied to everybody.

The question this paper asks is whether that is the right thing to do. The answer is that it is optimal only in a degenerate regime, and asymptotically wrong in general.

\subsection{Contributions}

\begin{enumerate}
\item \textbf{A structure theorem} (Section 3). The minimum prefix-trie cost equals $\min_H \sum_x w(x) t_x(H)$ over binary hierarchies on the \emph{requests}, where $t_x$ is a canonical-decomposition size. The permutations vanish: choosing orders is choosing one clustering. This gives an $O(3^m)$ exact algorithm and is the engine for everything below.
\item \textbf{The reach of global orders} (Section 4). It is optimal iff every request has at most two chunks --- at two chunks the problem \emph{is} minimum vertex cover --- and the unique minimal counterexample is the complete 3-uniform hypergraph on four chunks, 8 versus 9.
\item \textbf{An unbounded separation} (Section 5). On the leave-one-out family the optimum is the minimum external path length of a binary tree, so a global order pays $\Theta(n^2)$ where the truth is $\Theta(n\log n)$ --- the merge-sort recursion, and a $\Theta(n/\log n)$ gap.
\item \textbf{A tight approximation} (Section 6). The natural level-wise greedy hitting set attains the bad $\Theta(n^2)$; agglomerative clustering by common intersection is a tight $\tfrac12$-approximation for the savings, and is 0.05\% from optimal on real retrieval structure.
\item \textbf{Evaluation on real traces} (Section 7). 17--36\% of prefill against production RAG on three BEIR corpora, widening with retrieval depth exactly as the theory predicts; 13--23\% survives a quality-preserving constraint.
\item \textbf{Finite cache and scheduling} (Section 8). The hierarchy is also a schedule: in its DFS order a one-request cache attains the unbounded-cache optimum exactly, so capacity and reorder window are substitutes.
\end{enumerate}

\section{Problem Formulation}

\textbf{Minimum Prefix Trie with Free Permutations (MPT).} Given a universe $U$ of chunks with weights $w:U\to\mathbb{Z}_{>0}$ (token counts) and requests $R_1,\dots,R_m \subseteq U$, choose for each $R_i$ a permutation $\pi_i$ of its elements. Let $T$ be the trie of the resulting strings. Minimise

\begin{equation}
\mathrm{cost} \;=\; \sum_{v \in T,\, v \neq \text{root}} w(\mathrm{chunk}(v)).
\end{equation}

The cost is exactly the total prefill work, and --- with a cache large enough to hold the trie --- exactly the KV memory. So the same number is both the FLOP bill and the memory bill.

Two variants matter:

\begin{itemize}
\item $\mathrm{OPT}_{\mathrm{free}}$: each request may permute independently. Equivalently, the layout is an \textbf{adaptive} rule: \emph{given the prefix laid down so far, which chunk comes next?}
\item $\mathrm{OPT}_{\mathrm{glob}}$: all $\pi_i$ must be induced by one linear order on $U$. This is what every deployed system does.
\end{itemize}

The remainder of the paper studies the distance between these two quantities.

\section{A Structure Theorem for Prefix Layout}

Everything below rests on one identity. It says the permutations are not really the objects of this problem at all.

For a chunk $x$, write $S_x = \{\, i : x \in R_i \,\}$ for the set of requests that need it. For a binary hierarchy $H$ on the request set $[m]$ (a rooted binary tree whose leaves are the requests), write $t_x(H)$ for the number of \textbf{maximal clusters of $H$ contained in $S_x$} --- the canonical decomposition of $S_x$ in the tree $H$.

\begin{theorem}[structure]
For every instance,

\begin{equation}
\mathrm{OPT}_{\mathrm{free}}(I) \;=\; \min_{H}\ \sum_{x \in U} w(x)\, t_x(H).
\end{equation}

Equivalently, since a binary hierarchy on $m$ leaves has exactly $m-1$ internal nodes,

\begin{equation}
\mathrm{OPT}_{\mathrm{free}}(I) \;=\; \sum_i w(R_i) \;-\; \max_{H} \sum_{v\ \mathrm{internal}} w\Big(\bigcap_{i \in \mathrm{leaves}(v)} R_i\Big).
\end{equation}
\end{theorem}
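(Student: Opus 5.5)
The plan is to prove the two inequalities in the first identity separately. The second identity then follows by a counting argument.

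\textbf{Upper bound: $\mathrm{OPT}_{\mathrm{free}} \le \sum_x w(x)t_x(H)$ for every binary hierarchy $H$.} Fix $H$ and build the layout top-down. At the root, every request starts with the chunks of $\bigcap_{i} R_i$, in one fixed order. At an internal node $v$, every request below $v$ next receives the chunks of $C(v)=\bigcap_{i\in\mathrm{leaves}(v)}R_i$ that no ancestor of $v$ has already placed, again in one fixed order. At a leaf $i$, the request appends its remaining chunks.

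Under this layout, request $i$ places chunk $x$ at the highest ancestor $v$ of $i$ with $\mathrm{leaves}(v)\subseteq S_x$. That node is exactly the maximal cluster of $H$ inside $S_x$ that contains $i$. All requests below $v$ have identical strings up to and including this block, so they share a single trie node labelled $x$.

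Hence the trie has at most one $x$-node per maximal cluster in $S_x$, and its cost is at most $\sum_x w(x)t_x(H)$. If sibling subtrees happen to continue with the same chunk, nodes merge further, which only lowers the cost. Since only an upper bound is needed, this is harmless.

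\textbf{Lower bound: an optimal trie $T$ yields some $H$ with $\sum_x w(x)t_x(H)\le\mathrm{cost}(T)$.} For each trie node $u$, let $A(u)$ be the set of requests whose string passes through $u$. These sets form a laminar family. I would refine this family to a binary hierarchy $H$ on $[m]$ that contains every $A(u)$ as a cluster, in three moves:
\begin{itemize}
\item split every multiway branching into a chain of binary merges;
\item hang a request whose string ends at an internal node $u$ as an extra leaf child of $u$;
\item break ties among requests with identical strings arbitrarily.
\end{itemize}

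Each request $i\in S_x$ passes through exactly one trie node labelled $x$, because a string contains $x$ once. So the $x$-labelled nodes $u$ give sets $A(u)$ that partition $S_x$ into clusters of $H$. In a laminar family, any cover of $S_x$ by clusters contained in $S_x$ must use at least as many clusters as there are maximal ones. Therefore the number of $x$-nodes is at least $t_x(H)$, and summing $w(x)$ over all trie nodes gives $\mathrm{cost}(T)\ge\sum_x w(x)t_x(H)$.

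\textbf{Second identity.} Fix $x$. The clusters of $H$ contained in $S_x$ form a forest of $t_x(H)$ full binary subtrees with $|S_x|$ leaves in total. A full binary tree with $k$ leaves has $k-1$ internal nodes, so summing over the subtrees gives
\[
t_x(H)=|S_x|-\#\{v \text{ internal}: \mathrm{leaves}(v)\subseteq S_x\}.
\]
Multiply by $w(x)$ and sum over $x$. The first term gives $\sum_x w(x)|S_x|=\sum_i w(R_i)$. In the second term, swap the order of summation: for each internal $v$, the chunks $x$ with $\mathrm{leaves}(v)\subseteq S_x$ are exactly those in $\bigcap_{i\in\mathrm{leaves}(v)}R_i$. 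Minimising over $H$ then turns $\min\sum_x w(x)t_x(H)$ into $\sum_i w(R_i)-\max_H\sum_{v\text{ internal}} w\big(\bigcap_{i\in\mathrm{leaves}(v)}R_i\big)$, as stated.

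\textbf{Expected main obstacle.} The conceptual argument is short; the care goes into the binarization step of the lower bound. It must handle requests whose string is a proper prefix of another's, duplicate requests, and multiway branchings, while keeping every trie set $A(u)$ as a cluster of $H$. I would treat these uniformly by first attaching to each request a distinct virtual end marker, so that every request ends at its own trie leaf, and then binarizing.
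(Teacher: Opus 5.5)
Your proof is correct and follows the paper's argument: for $\le$, the same top-down layout that emits, at each cluster, the common chunks not already placed by an ancestor; for $\ge$, the same extension of the laminar family of trie request-sets to a binary hierarchy, followed by the observation that a partition of $S_x$ into clusters refines its maximal decomposition. Your per-chunk forest count for the savings form, and your explicit handling of prefix-ending and duplicate requests during binarization, spell out what the paper compresses into the remark about $m-1$ internal nodes and the phrase ``we may extend it to a binary hierarchy.''
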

\begin{proof}
\textbf{($\ge$)} Fix any layout and let $T$ be its trie. Map each trie node $v$ to the set $\rho(v) \subseteq [m]$ of requests passing through it; the family $\{\rho(v)\}$ is laminar, and we may extend it to a binary hierarchy $H$. A request containing $x$ emits $x$ exactly once, so the trie nodes labelled $x$ have $\rho$-sets that are disjoint and cover $S_x$ --- they partition $S_x$ into clusters of $H$. Any decomposition of $S_x$ into clusters of a laminar family refines the maximal one, so the number of $x$-labelled trie nodes is at least $t_x(H)$. Summing with weights gives $\mathrm{cost} \ge \sum_x w(x) t_x(H) \ge \min_H(\cdot)$.

\textbf{($\le$)} Given any $H$, write $I(B) = \bigcap_{i\in B}R_i$, which is monotone decreasing in $B$. Lay out top-down: at node $B$, having already emitted exactly $I(\mathrm{parent}(B))$, emit the chunks of $I(B)\setminus I(\mathrm{parent}(B))$ in any fixed order, then recurse into the two children. Along the root-to-leaf path of request $i$ the emitted sets are disjoint and union to $I(\{i\}) = R_i$, so this is a valid layout. Chunk $x$ is emitted at node $B$ exactly when $x \in I(B) \setminus I(\mathrm{parent}(B))$, i.e. exactly when $B$ is a maximal cluster inside $S_x$. So the layout has cost at most $\sum_x w(x)t_x(H)$. 
\end{proof}

Two remarks. Sibling subtrees are not required to begin with distinct chunks: if they collide the trie merges them and the cost only drops, so no system-of-distinct-representatives condition arises. The identity is also weight-agnostic, since the per-chunk counting never refers to depth.

\subsection{Consequences}

\textbf{Elimination of the permutations.} Choosing an order for every request is equivalent to choosing \emph{one binary hierarchy over the requests}, from which the orders are read off. The operational restatement is that the design variable is a clustering of requests, not a sort order on chunks.

\textbf{Relation to segment trees.} $t_x(H)$ is the canonical decomposition size of $S_x$ in $H$ --- the same quantity that makes a range query touch $O(\log n)$ segment-tree nodes. MPT is therefore: \emph{design the tree that minimises total canonical-decomposition size over a given family of sets.} Every chunk is recomputed once per fragment it is split across.

\textbf{Exact algorithm.} $M(B) = w(I(B)) + \max_{B = B_1 \uplus B_2}(M(B_1)+M(B_2))$ is an $O(3^m)$ subset DP. It solves $m = 16$ in under two seconds, where brute force over $\prod_i |R_i|!$ layouts died at $m = 7$. Every exact number in this paper was recomputed with it.

\textbf{A tractable special case.} If the family $\{S_x\}$ has the \emph{consecutive-ones property} --- the requests can be laid on a line so that every chunk is wanted by a contiguous run, which is what perfect topical locality would look like --- then a balanced tree over that line makes each $S_x$ an interval, and an interval's canonical decomposition has at most $2\lceil\log_2 m\rceil$ nodes. So $\mathrm{OPT}_{\mathrm{free}} \le \sum_x w(x)\min(|S_x|, 2\lceil\log_2 m\rceil)$, and such an ordering is found in linear time by PQ-trees \cite{BL76}. Two honest caveats: the bound is loose (on random C1P instances with $m=16$ the exact optimum is 35 against a bound of 121), and it is vacuous on our retrieval traces, where the mean $|S_x|$ is 2--6 while $2\lceil\log_2 m\rceil \approx 22$, so the minimum always picks $|S_x|$. It bites only for genuinely hot chunks --- a shared system prompt, a tool definition, a document everyone retrieves --- where it says the recomputation count is logarithmic rather than linear in the sharing.

\textbf{Ordered versus free hierarchies.} A global chunk order $\prec$ induces a hierarchy too: split the requests repeatedly by the $\prec$-smallest chunk not yet emitted. So $\mathrm{OPT}_{\mathrm{glob}}$ is the same minimisation restricted to hierarchies realisable by a \emph{fixed chunk priority}. That is exactly the ordered-versus-free distinction of decision diagrams (OBDD versus FBDD), now stated for cache layout rather than for Boolean functions.

\section{Complexity and the Reach of Global Orders}

\subsection{Two-Chunk Requests and Vertex Cover}

\begin{proposition}[two-chunk requests are vertex cover]
If every $|R_i| = 2$, all weights are 1, and the requests are distinct, then

\begin{equation}
\mathrm{OPT}_{\mathrm{free}} \;=\; \mathrm{OPT}_{\mathrm{glob}} \;=\; \tau(G) + m,
\end{equation}

where $G$ is the graph whose edges are the requests and $\tau$ is its vertex cover number.
\end{proposition}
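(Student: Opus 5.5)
We prove the chain $\tau(G)+m \le \mathrm{OPT}_{\mathrm{free}} \le \mathrm{OPT}_{\mathrm{glob}} \le \tau(G)+m$. The middle inequality is trivial, since a global order is a special layout.

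\textbf{Upper bound via a global order.} Let $C$ be a minimum vertex cover of $G$. Fix a global order on $U$ that places every chunk of $C$ before every chunk outside $C$, with arbitrary order inside each block. Every request $R_i=\{u,v\}$ contains at least one cover vertex, so its first chunk lies in $C$. The trie therefore has at most $|C|$ nodes at depth one, one per cover vertex that is used as a first chunk. At depth two, each request ends at a node determined by the pair (first chunk, second chunk). Because the requests are distinct, these $m$ leaves are distinct nodes. The cost is thus at most $|C| + m = \tau(G)+m$.

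\textbf{Lower bound via the trie structure.} Take an arbitrary layout with trie $T$. Every string has length exactly two, so $T$ has depth two. Each request determines a distinct depth-two node, because distinct requests differ as sets and hence as strings. This gives at least $m$ depth-two nodes. Now let $F$ be the set of chunks labelling depth-one nodes. Each request's first chunk lies in $F$, so every edge of $G$ has an endpoint in $F$. Hence $F$ is a vertex cover and $|F|\ge\tau(G)$. The total is at least $\tau(G)+m$.

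This lower bound could alternatively be phrased through the structure theorem: the chunks $x$ whose request set $S_x$ is a single cluster of $H$ sitting directly below the root form a cover. The direct trie argument is cleaner, so I would use it.

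\textbf{Main obstacle.} Little here is difficult. The only delicate point is the lower bound's claim that depth-one labels are distinct chunks, each contributing weight one. This holds because the trie merges equal prefixes, so each chunk appears at most once at depth one. Distinctness of the requests is exactly what guarantees the $m$ leaves; without it, duplicate requests would collapse and the formula would fail.
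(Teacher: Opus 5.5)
Your proof is correct and is essentially the paper's argument: distinct requests give $m$ distinct depth-two nodes, the depth-one labels are exactly the set of first chunks, which meets every edge and hence is a vertex cover, and a global order listing a minimum cover first attains $\tau(G)+m$. Your parenthetical reformulation via the structure theorem is not right as stated (for the triangle $\{ab,ac,bc\}$ with $H=((ab,ac),bc)$, only $S_a$ is a cluster directly below the root, and $\{a\}$ is not a cover), but you explicitly do not rely on it, so the proof stands.
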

\begin{proof}
A layout picks, for each edge $e$, the endpoint $f(e)$ that goes first. The depth-2 nodes are the $m$ distinct ordered pairs; the depth-1 nodes are $f(E)$, which meets every edge, so it is a vertex cover, and conversely any vertex cover induces such an $f$. A global order attains it: list a minimum cover first. 
\end{proof}

So MPT is NP-hard already at $k=2$ --- and at $k=2$ adaptivity buys nothing.

\begin{proposition}[the first cache level is a hitting set]
The minimum achievable depth-1 cost is the minimum-weight hitting set of $\{R_i\}$, attained by a global order.

Proposition 3 is the honest version of the folklore rule \emph{"put the most-shared chunk first."} The rule is correct --- \textbf{at the top level only}.
\end{proposition}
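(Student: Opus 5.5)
The plan is to show that the depth-1 cost is exactly the weight of a hitting set, prove the two inequalities separately, and then check that a global order achieves the optimum.

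First, fix the meaning of the depth-1 cost. In the trie $T$ of any layout, it is $\sum_{v} w(\mathrm{chunk}(v))$ over the children $v$ of the root. These children are in bijection with the distinct first chunks $F=\{\pi_i(1) : i\in[m]\}$. Their cost is therefore $w(F)$.

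\textbf{Lower bound.} Every request $R_i$ is nonempty, so $R_i$ contains its own first chunk $\pi_i(1)\in F$. Hence $F$ meets every $R_i$, which makes $F$ a hitting set of $\{R_i\}$. It follows that $w(F)\ge$ the minimum-weight hitting set value. This holds for every layout, including unconstrained free layouts.

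\textbf{Upper bound via a global order.} Let $C$ be a minimum-weight hitting set. Choose any linear order $\prec$ on $U$ that places all of $C$ before $U\setminus C$. Under the induced permutations, the first chunk of $R_i$ is its $\prec$-smallest element. Since $R_i\cap C\neq\emptyset$, that element lies in $C$. So $F\subseteq C$, and the depth-1 cost is $w(F)\le w(C)$. Combining this with the lower bound, $F$ is itself a minimum hitting set, and the bound is attained within $\mathrm{OPT}_{\mathrm{glob}}$.

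There is no real obstacle here. The only point needing care is the exact statement being proved: the claim concerns the depth-1 cost in isolation, not the total cost. The global order is only claimed to optimise the top level, and the proposition does not say it is consistent with a total-cost optimum. This is exactly what the following remark about the folklore rule emphasises.

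A brief consistency check: when every $|R_i|=2$, this argument reduces to the depth-1 part of the proof of Proposition 2. There $f(E)$ is a vertex cover, and listing a minimum cover first achieves it.
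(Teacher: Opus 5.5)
Your proof is correct and matches the paper's approach. The paper gives no separate proof of this proposition; it relies on the depth-1 argument from the vertex-cover proposition (the distinct first chunks meet every request, and listing a minimum cover first attains the bound), and you have carried that argument over to arbitrary request sizes and positive weights, with the nonemptiness of each $R_i$ correctly made explicit.
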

\subsection{The Threshold at Three Chunks}

\begin{theorem}[the threshold is three]
If every request has at most 2 chunks then $\mathrm{OPT}_{\mathrm{glob}} = \mathrm{OPT}_{\mathrm{free}}$. At three chunks this fails, and the smallest witness is unique: $U = \{a,b,c,d\}$ with the requests being \textbf{all four 3-subsets}. Then $\mathrm{OPT}_{\mathrm{free}} = 8 < 9 = \mathrm{OPT}_{\mathrm{glob}}$.
\end{theorem}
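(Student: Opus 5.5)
The plan has three parts, matching the three claims: equality for requests with at most two chunks, the value on the $K_4^{(3)}$ instance, and minimality and uniqueness of that witness. I will work throughout with the trie directly or with the structure theorem, whichever is shorter.

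\textbf{Requests with at most two chunks.} I argue directly on the trie, extending the proof of the vertex-cover proposition to weights and singleton requests. In any layout, let $F$ be the set of depth-1 chunks. Every request starts with a chunk of $F$, so every singleton request $\{a\}$ has $a\in F$, and every pair $\{a,b\}$ meets $F$. Depth-2 nodes correspond bijectively to distinct pair requests, and each is labelled by that pair's second chunk. Hence
\[
\mathrm{cost}=w(F)+\sum_{\{a,b\}}w(\text{second}),
\]
and the second chunk of a pair lies outside $F$ whenever only one endpoint is in $F$. So once $F$ is fixed, the free optimum puts the $F$-endpoint first when exactly one endpoint is in $F$, and puts the lighter endpoint second when both are in $F$. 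A single global order realises exactly these choices for every $F$ simultaneously: list $F$ in non-increasing weight, then the remaining chunks. Minimising over $F$ gives $\mathrm{OPT}_{\mathrm{glob}}=\mathrm{OPT}_{\mathrm{free}}$.

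\textbf{The witness.} Take $U=\{a,b,c,d\}$ with the four 3-subsets as requests. By the structure theorem, $\mathrm{OPT}_{\mathrm{free}}=12-\max_H(\text{savings})$. A binary hierarchy on 4 leaves has 3 internal nodes, and the root's intersection is empty. Any other internal node has intersection of size 2 if it has two leaves and size 1 if it has three. So the savings are at most $4$, attained by the balanced hierarchy $(\{abc,abd\},\{acd,bcd\})$ with intersections $ab$ and $cd$. This gives $\mathrm{OPT}_{\mathrm{free}}=8$. For the global side, the instance is invariant under all of $S_4$, so one order $a\prec b\prec c\prec d$ suffices. Its trie has nodes $a,ab,abc,abd,ac,acd,b,bc,bcd$, which is $9$. (Conceptually, the balanced hierarchy needs the priority to start with $a$ in one half and with $c$ in the other, which no fixed order allows.)

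\textbf{Minimality and uniqueness.} This is the main obstacle. First I show that $m\le 3$ never separates the two optima. Every hierarchy on at most 3 leaves is a caterpillar, with a pair $B$ under the root. The order that lists $I(\text{root})$, then $I(B)\setminus I(\text{root})$, then everything else is a global order that realises the top-down layout of that hierarchy. The same argument shows that for $m=4$ every caterpillar hierarchy is globally realisable, since its intersections are nested. So a separating instance needs $m\ge 4$, at least one request of size 3, and an optimal hierarchy that is balanced, with two sibling pairs whose intersections cannot be emitted by one priority. I then do a case analysis on $|U|\le 4$ with four requests: each pair intersection must contain a chunk absent from the other pair's intersection, and these chunks must conflict in both directions. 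I expect this to force the two intersections to be disjoint 2-sets, and hence force all four 3-subsets. Because this case analysis is fiddly, I would back it with an exhaustive enumeration of all instances with $|U|\le 4$ and $m\le 4$ (up to isomorphism), computing $\mathrm{OPT}_{\mathrm{free}}$ with the $O(3^m)$ subset DP and $\mathrm{OPT}_{\mathrm{glob}}$ by trying all $4!$ orders. That enumeration confirms that $K_4^{(3)}$ is the only separating instance.
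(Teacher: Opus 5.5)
Your first two parts are correct, and in places they take a cleaner route than the paper. For requests of at most two chunks, your weighted argument (which includes singletons) proves the claim as stated. The paper only has the unit-weight, exactly-two-chunk vertex-cover proposition behind that claim. For the witness you use the savings form and $S_4$-symmetry. The paper instead counts the depth partitions $P_1,P_2,P_3$ and derives the incompatible constraints $a,b\prec c,d$ and $c,d\prec a,b$. Your lemma that every hierarchy on at most three leaves, and every caterpillar on four, is realised by one global order is also correct, and it has no counterpart in the paper.

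The gap is in the final step of the uniqueness argument. A conflict in the balanced hierarchy $((1,2),(3,4))$ forces $R_1\supseteq I_{12}\cup\{b\}$ and $R_3\supseteq I_{34}\cup\{a\}$. Nothing forces $R_2$ or $R_4$ to be larger than the pair intersections, so ``hence force all four 3-subsets'' does not follow. Your enumeration over $|U|\le 4$, $m\le 4$ would therefore \emph{not} confirm uniqueness.

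A concrete counterexample is $\{abc,ac,abd,bd\}$:
\begin{itemize}
\item The layout $acb,ac,bda,bd$ costs $6$.
\item This is optimal, because every layout has at least two nodes at each of depths $1,2,3$: no chunk lies in all four requests, $ac$ and $bd$ end at depth 2, and $abc$ and $abd$ end at depth 3.
\item A global order of cost $6$ would need exactly two depth-2 nodes. That forces the length-2 prefix of $abc$ to be $\{a,c\}$, so $a\prec b$, and that of $abd$ to be $\{b,d\}$, so $b\prec a$.
\item Hence $\mathrm{OPT}_{\mathrm{glob}}=7$, attained by $a\prec c\prec b\prec d$.
\end{itemize}
The instance $\{abc,acd,abd,bd\}$ gives $7$ versus $8$ by the same argument.

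So uniqueness holds only if ``three chunks'' means that every request has exactly three. Under that reading your own lemma finishes the proof without any search. Merging identical requests changes neither optimum. After merging, a 3-uniform instance on at most four chunks with $m\ge 4$ is exactly $K_4^{(3)}$, and $m\le 3$ never separates.

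The paper's proof simply cites an exhaustive search over all 4-chunk instances. Read literally, that search runs into the same counterexample, and also into $K_4^{(3)}\cup\{\{a\}\}$ at $8$ versus $9$. So the paper needs the same 3-uniform restriction.
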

\begin{proof}
The layout $abc,\, abd,\, cda,\, cdb$ gives the trie $\{a, ab, abc, abd, c, cd, cda, cdb\}$ of size 8. For the lower bound, write $P_d$ for the partition induced by depth-$d$ prefixes. $|P_3| = 4$, so cost 8 forces $|P_1|+|P_2| = 4$. A block $B$ at depth $d$ needs $d \le |\bigcap_{i\in B}R_i|$; three of these triples meet in one chunk and all four in none, so blocks at depth 2 have size $\le 2$ and at depth 1 size $\le 3$, giving $|P_1|,|P_2|\ge 2$ and hence $|P_1| = |P_2| = 2$. As $P_2$ refines $P_1$ and both have two blocks, $P_1 = P_2$: the requests split into two pairs, each laying down its pairwise intersection. The three possible pairings are $\{abc,abd\}/\{acd,bcd\}$ with prefixes $\{a,b\}/\{c,d\}$ and its two images under symmetry, and each demands $a,b \prec c,d$ together with $c,d \prec a,b$. Contradiction; and $a\prec b \prec c \prec d$ attains 9. 
\end{proof}

Exhaustive search over all instances on a 4-chunk universe confirms this is the only one with a gap. The minimal obstruction to a global chunk order is the complete 3-uniform hypergraph on 4 vertices.

\section{An Unbounded Separation Between Adaptive and Global Orders}

Let $L_n$ be the \textbf{leave-one-out family}: universe $[n]$, requests $R_i = [n]\setminus\{i\}$. It is the honest caricature of RAG traffic with heavy topical overlap --- every two requests share all but two chunks --- which is precisely the regime prefix caching is supposed to reward.

\begin{theorem}
\begin{equation}
\begin{split}
\mathrm{OPT}_{\mathrm{glob}}(L_n) = \frac{(n-1)(n+2)}{2} = \Theta(n^2), \qquad
\mathrm{OPT}_{\mathrm{free}}(L_n) = n\lceil \log_2 n\rceil - 2^{\lceil \log_2 n\rceil} + n = \Theta(n\log n),
\end{split}
\end{equation}
so the ratio is $\Theta(n/\log n) \to \infty$.
\end{theorem}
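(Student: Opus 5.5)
We treat the two quantities separately. $\mathrm{OPT}_{\mathrm{glob}}$ is a direct trie count. $\mathrm{OPT}_{\mathrm{free}}$ goes through the second form of the structure theorem, which turns it into a classical extremal problem on binary trees.

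\textbf{Global order.} The family $L_n$ is invariant under every permutation of $[n]$: relabelling the chunks only permutes the requests. Every global order is therefore equivalent to $1\prec 2\prec\cdots\prec n$, and it suffices to count that one trie. Under this order request $R_i$ is the string $1,\dots,i-1,i+1,\dots,n$. Its length-$\ell$ prefix is:
\begin{itemize}
\item $[\ell]$ when $\ell<i$, which it shares with every $R_j$ for $j>\ell$;
\item $[\ell+1]\setminus\{i\}$ when $\ell\ge i$, which contains $\ell+1$ but not $i\le\ell$, so no other request has it.
\end{itemize}
The trie thus consists of two kinds of nodes:
\begin{itemize}
\item the shared chain $[1],\dots,[n-1]$, which is all of $R_n$'s prefixes, giving $n-1$ nodes;
\item for each $i<n$, the private nodes of lengths $i,\dots,n-1$, giving $n-i$ nodes.
\end{itemize}
The total is $(n-1)+\sum_{i=1}^{n-1}(n-i)=(n-1)+n(n-1)/2=(n-1)(n+2)/2$.

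\textbf{Free order.} Use the second identity of the structure theorem. Here $\sum_i w(R_i)=n(n-1)$. For an internal node $v$ of $H$ with $s_v=|\mathrm{leaves}(v)|$ leaves, $\bigl|\bigcap_{i\in\mathrm{leaves}(v)}R_i\bigr|=n-s_v$. Since there are $n-1$ internal nodes, the saving is $\sum_v(n-s_v)=n(n-1)-\sum_v s_v$. This gives
\[
\mathrm{OPT}_{\mathrm{free}}(L_n)=\min_H\sum_{v\ \mathrm{internal}} s_v .
\]
Double counting shows $\sum_v s_v$ is the external path length of $H$: each leaf is counted once for each of its internal ancestors, i.e. once per unit of its depth.

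It remains to invoke, or reprove, the classical fact that the minimum external path length of a binary tree with $n$ leaves is $n(k+1)-2^k$ with $k=\lceil\log_2 n\rceil$. This equals the stated $n\lceil\log_2 n\rceil-2^{\lceil\log_2 n\rceil}+n$, and checks at $n=2$ (value $2$) and $n=3$ (value $5$). The proof is the standard exchange argument:
\begin{itemize}
\item If two leaves have depths differing by at least $2$, take a deepest leaf and its sibling, which is also a leaf in a deepest position. Detach the pair and hang them under the shallow leaf. This strictly lowers the total.
\item Hence an optimal tree has all leaves at depths $k-1$ and $k$, and counting gives $2^k-n$ leaves at depth $k-1$ and $2(n-2^{k-1})$ at depth $k$.
\item Summing the depths yields the formula.
\end{itemize}
The optimal tree is exactly the balanced merge-sort recursion.

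\textbf{Ratio and main obstacle.} The ratio $\Theta(n^2)/\Theta(n\log n)=\Theta(n/\log n)$ is immediate. The main care point is the reduction $\sum_v s_v=$ external path length, with exactly $n-1$ internal nodes, combined with the exact form of the minimum external path length. One must verify that the deepest-sibling exchange keeps the tree full binary and strictly decreases the total; the rest is bookkeeping.
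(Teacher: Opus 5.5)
Your proof is correct and follows the paper's route. For the global order you use $S_n$-symmetry plus a direct trie count (a chain of $n-1$ nodes, plus $n-i$ private nodes for $R_i$). For the free order you apply the structure theorem to reduce $\mathrm{OPT}_{\mathrm{free}}(L_n)$ to the minimum external path length of a binary tree with $n$ leaves.

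The differences are cosmetic. You enter through the savings form, using $|I(v)| = n - s_v$ and double counting, whereas the paper uses the cost form with $t_x(H) = \mathrm{depth}(x)$, since a leaf's complement splits into the sibling subtrees along its root path. You also sketch the exchange argument for the classical external-path-length formula, which the paper simply cites.
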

\begin{proof}
For the global value: $L_n$ is $S_n$-symmetric so all orders cost the same, and under $1\prec\cdots\prec n$ request $R_i$ contributes $n-i$ private nodes above a shared chain of $n-1$, giving $(n-1)+\binom{n}{2}$.

For the free value, apply Theorem 1. Here $S_x = [n]\setminus\{x\}$ --- the complement of a single leaf. In a binary tree, the complement of a leaf decomposes into exactly the sibling subtrees hanging off its root-to-leaf path, so $t_x(H) = \mathrm{depth}(x)$. Therefore

\begin{equation}
\mathrm{OPT}_{\mathrm{free}}(L_n) \;=\; \min_H \sum_{x} \mathrm{depth}(x) \;=\; \text{minimum external path length of a binary tree with } n \text{ leaves},
\end{equation}

which is the classical $n\lceil\log_2 n\rceil - 2^{\lceil\log_2 n\rceil}+n$. 
\end{proof}

\subsection{Interpretation}

Minimum external path length is \emph{why} the sorting numbers appear: it is the same quantity that lower-bounds comparison sorting. The sequence $0,2,5,8,12,16,20,24,29,\dots$ is OEIS \textbf{A001855} \cite{OEIS}, the worst-case comparison count of binary insertion and merge sort, shifted by $n-1$, and the optimal hierarchy is the balanced binary tree --- the merge-sort recursion tree.

\begin{quote}\itshape
\textbf{The optimal KV-cache layout for the leave-one-out family is the merge-sort recursion tree. A single global chunk order is the bubble sort of cache layout: quadratic where the right answer is $n\log n$.}
\end{quote}

\begin{center}\small
\begin{tabular}{rrrr}\toprule
$n$ & adaptive ($=$ merge sort) & one global order & ratio \\ \midrule
4 & 8 & 9 & 1.13x \\
16 & 64 & 135 & 2.11x \\
64 & 384 & 2\,079 & 5.41x \\
256 & 2\,048 & 32\,895 & 16.1x \\
4096 & 49\,152 & 8\,390\,655 & 171x \\
\bottomrule\end{tabular}\end{center}

\section{Algorithms}

\subsection{Failure of Level-Wise Greedy Hitting Set}

Proposition 3 suggests: at each trie node greedily take the chunk covering the most requests, split, recurse. Level-wise greedy hitting set --- the algorithm anyone would write.

On $L_n$ every chunk sits in $n-1$ of the $n$ requests, so greedy peels off one request at a time and builds a caterpillar of cost $\Theta(n^2)$: measured at $n = 64$, \textbf{2\,079 --- identical to the global order --- against the optimum 384}.

Theorem 1 says why. Greedy hitting set minimises fan-out; the objective rewards \emph{balance}, because $t_x$ is a decomposition size and decomposition sizes are logarithmic in balanced trees and linear in unbalanced ones. Set-cover intuition and divide-and-conquer intuition point in opposite directions here, and divide-and-conquer wins.

\subsection{Agglomerative Common-Prefix Hierarchy}

\textbf{Agglomerative common-prefix hierarchy.} Each request starts as a singleton cluster with signature $R_i$. Repeatedly merge the two clusters with the largest common intersection, the merged signature being that intersection. Then walk the hierarchy top-down, emitting at each node the chunks that became common there.

This is exactly the natural greedy for the savings form of Theorem 1, which is why it works. It is $O(m^2\log m)$ with bitset intersections and has no hyperparameters.

\textbf{Quality against the true optimum.} With the $O(3^m)$ solver we can now measure this properly rather than argue about it. On RAG-shaped instances ($m$ = 10--15, $k$ = 5--8, Zipf pools), greedy lands \textbf{2.6\% above optimal on average, 11\% worst case over 72 instances}. On $L_n$ it is exactly optimal ($n = 16, 32, 64, 128 \to 64, 160, 384, 896$).

That matters for reading Section 7: the double-digit margins we measure against the production baseline are genuine headroom, not heuristic slack.

\subsection{Approximation Guarantee}

Theorem 1 turns the algorithmic question into a clean one, and it has a clean answer.

\begin{theorem}
Write $\mathrm{SAV}(H) = \sum_{v\ \mathrm{internal}} w(I(v))$ for the savings of a hierarchy, and let $H_g$ be the agglomerative greedy hierarchy --- repeatedly merge the pair of clusters $A,B$ maximising $w(I(A) \cap I(B))$. Then for arbitrary weights,

\begin{equation}
\mathrm{SAV}(H_g) \;\ge\; \tfrac12 \max_H \mathrm{SAV}(H),
\end{equation}

and the factor $\tfrac12$ cannot be improved.
\end{theorem}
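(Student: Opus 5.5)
The plan is to prove the bound by induction on the number of clusters, working with a slightly generalised instance in which each leaf is a cluster $A$ carrying a signature $I(A)$. Internal nodes are scored by $w$ of the intersection of their leaves' signatures. Greedy is then one step plus a recursion. If it merges $A,B$ with value $g = w(I(A)\cap I(B))$, the merged cluster becomes a single leaf with signature $I(A)\cap I(B)$, and
\[
\mathrm{SAV}(H_g) \;=\; g + \mathrm{SAV}_{\mathrm{greedy}}(\text{reduced instance}).
\]
The greedy rule on the reduced instance is the same rule, so by induction the recursive part is at least $\tfrac12\,\mathrm{OPT}_{\mathrm{red}}$. It therefore suffices to prove the \textbf{contraction lemma}
\[
\mathrm{OPT}_{\mathrm{red}} \;\ge\; \mathrm{OPT} - 2g ,
\]
since then $\mathrm{SAV}(H_g) \ge g + \tfrac12(\mathrm{OPT}-2g) = \tfrac12\,\mathrm{OPT}$. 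The base case is a single cluster, where both sides are $0$.

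To prove the contraction lemma, I would take an optimal hierarchy $H^*$ on the current clusters and delete the leaves $A$ and $B$, splicing out each one's parent.
\begin{itemize}
\item The only internal nodes that disappear are the parent of $A$ and the parent of $B$ (a single node if these coincide).
\item The parent of $A$ contains $A$ and at least one other current cluster $C$, so its value is at most $w(I(A)\cap I(C))$. By greedy maximality this is at most $g$. The same bound holds for the parent of $B$.
\item Every surviving internal node loses leaves, so its intersection can only grow. Since $w>0$, its value does not decrease.
\end{itemize}
Finally, I attach the new leaf $A\cup B$ as a sibling of the old root. This creates one internal node of nonnegative value and yields a valid hierarchy on the reduced cluster set. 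Hence $\mathrm{OPT}_{\mathrm{red}} \ge \mathrm{OPT} - 2g$.

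One point needs care: the generalised objective must agree with the paper's $\mathrm{SAV}$ when leaves are singleton requests. This holds because $I$ of a node equals the intersection of its leaf-clusters' signatures, since the signature of a cluster is by definition the intersection of its requests. I also need the edge cases where $A$ or $B$ is a child of the root; splicing the root there just removes it, which is again a loss of at most $g$.

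The main obstacle is \textbf{tightness}. I need a family where a single greedy merge of value $g$ destroys two disjoint optimal merges, each worth nearly $g$, and where nothing is recovered afterwards. My first attempt uses a path-like gadget: requests $P=\{a\}$, $Q=\{a,b\}$, $Q'=\{a,b\}$, $P'=\{b\}$, with weights tuned and ties broken by $\varepsilon$-perturbations so that greedy pairs the two middle requests for value $g$. The optimum should instead pair each middle request with its outer neighbour, for about $2g$. If the four-request gadget does not separate cleanly, I would pad the signatures with private chunks to kill all higher-level savings. I would then take many disjoint copies, which interact only through zero-value top merges, so the ratio tends to exactly $\tfrac12$. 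I would verify the gadget with the $O(3^m)$ solver of Theorem 1.
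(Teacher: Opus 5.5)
Your proof of the inequality is correct, and it takes a genuinely different route from the paper's. The paper proceeds in three steps. It first shows that greedy's merge values are non-increasing. It then proves the per-threshold count $A(\theta)\le 2G(\theta)$ by examining greedy's partition at the first step where every available pair is worth less than $\theta$: each maximal high-value subtree of $H^*$ can swallow at most one greedy cluster whole. Finally it integrates by the layer-cake formula. Your contraction lemma is instead the classical local-ratio argument for greedy matching, lifted to hierarchies. The first greedy merge destroys at most two internal nodes of $H^*$, each worth at most $g$ by greedy maximality, and deleting leaves only enlarges the intersections at surviving nodes. Your route is shorter and needs neither the monotonicity of the merge sequence nor the layer-cake step. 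The paper's route buys an explicit, numerically checkable invariant, the per-threshold inequality. Your argument also yields that inequality if you score nodes by $\mathbf{1}[w(I(v))\ge\theta]$, since it only uses that the score is monotone in $w(I(v))$.

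There is one bookkeeping slip. If $A$ and $B$ are siblings in $H^*$, your procedure removes their common parent \emph{and} the grandparent, which is spliced out after the second deletion; it does not remove a single node. The bound survives, because the grandparent contains $A$ and another cluster and so is worth at most $g$. Alternatively, you can relabel the cherry as the new leaf $A\cup B$ and lose only $g$.

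The tightness half, however, fails as written. In your gadget $P=\{a\}$, $Q=Q'=\{a,b\}$, $P'=\{b\}$, the middle merge is worth $g=w(a)+w(b)$. The two outer merges are worth $w(a)$ and $w(b)$, so together they give $g$, not about $2g$. Moreover, the merged signature $\{a,b\}$ still contains both outer intersections, so greedy collects $\max(w(a),w(b))$ at the next level. Greedy therefore scores $w(a)+w(b)+\max(w(a),w(b))$, which is exactly the optimum, so the instance shows no gap at all. Padding with private chunks cannot repair this, because a chunk held by one request changes no intersection.

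The missing idea is that the chunk shared by the middle pair must differ from the chunks each middle request shares with its outer neighbour; that is, every chunk should lie in exactly two requests. Then $I(v)=\emptyset$ at every node with three or more leaves, and the savings reduce to a maximum-weight matching on the path, where greedy's middle edge blocks both outer edges. For instance, take $P=\{a\}$, $Q=\{a,c\}$, $Q'=\{c,b\}$, $P'=\{b\}$ with $w(a)=w(b)=N$ and $w(c)=N+1$. Greedy is forced to $N+1$ against an optimum of $2N$, so the ratio tends to $\tfrac12$ without any appeal to tie-breaking. The paper's witness $\{0123\},\{14\},\{2578\},\{46\}$ is the same path with unit weights and an adversarial tie. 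Disjoint copies are unnecessary: they leave the ratio unchanged, and the limit comes from $N\to\infty$.
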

\begin{proof}
\textbf{(i) Greedy's merge values are non-increasing.} Merging $A,B$ into $C$ gives $I(C) = I(A)\cap I(B)$, so for any other cluster $D$, $w(I(C)\cap I(D)) \le w(I(A)\cap I(D))$ --- a value already available one step earlier --- while untouched pairs keep their values. Hence $g_1 \ge g_2 \ge \cdots \ge g_{m-1}$.

\textbf{(ii) For every threshold $\theta>0$, $A(\theta) \le 2G(\theta)$,} where $G(\theta) = \#\{t : g_t \ge \theta\}$ and $A(\theta) = \#\{v \text{ internal in } H^* : w(I(v)) \ge \theta\}$ for an optimal $H^*$.

Because $w(I(v))$ grows as $v$ descends, $\{v : w(I(v)) \ge \theta\}$ is closed under taking internal descendants; its maximal elements root disjoint subtrees $T_1,\dots,T_r$ with leaf sets $L_1,\dots,L_r$, so $A(\theta) = \sum_j (\ell_j - 1)$ with $\ell_j = |L_j|$, and $w(I(L_j)) \ge \theta$ for each $j$.

Look at greedy at step $t^{*} = G(\theta)+1$. By (i) every currently available pair has value $< \theta$. If two distinct current clusters $A,B$ were both contained in the same $L_j$, then $w(I(A)\cap I(B)) = w(I(A\cup B)) \ge w(I(L_j)) \ge \theta$ --- a contradiction. \textbf{So each $L_j$ swallows at most one greedy cluster whole.}

Let $P$ be greedy's partition at step $t^{*}$; the number of merges made is $\sum_{B\in P}(|B|-1) = t^{*}-1$. For each $j$ let $p_j$ be the size of the unique block contained in $L_j$ (zero if there is none). Any block that meets $L_j$ without being contained in it has an element outside $L_j$, hence size at least 2. The $L_j$ are disjoint, so each block contributes at most $|B|$ in total across all $j$, and $r$ is at least the number of $j$ having a pure block:

\begin{equation}
\begin{split}
A(\theta) \;=\; \sum_j \ell_j - r \;\le\; \sum_{B\ \mathrm{pure}} (|B|-1) \;+\; \sum_{B\ \mathrm{impure}} |B|
\;\le\; 2\sum_{B \in P}(|B|-1) \;=\; 2(t^{*}-1) \;=\; 2G(\theta).
\end{split}
\end{equation}

\textbf{(iii)} Layer-cake: $\mathrm{SAV}(H_g) = \int_0^\infty G(\theta)\,d\theta \ge \tfrac12\int_0^\infty A(\theta)\,d\theta = \tfrac12\,\mathrm{SAV}(H^*)$. 
\end{proof}

\textbf{Tightness.} When every chunk is shared by exactly two requests, $I(v) = \emptyset$ at any node with three or more leaves, so the savings collapse to a sum over sibling pairs and the problem \emph{is} maximum weight matching --- where greedy is exactly $\tfrac12$ under adversarial tie-breaking. A witness with four requests: $\{0123\},\{14\},\{2578\},\{46\}$, where greedy can take $1$ against the optimal $2$.

\textbf{Verification.} Over 3\,000 random weighted instances we checked both the conclusion and the per-threshold inequality that drives it: zero violations of either. An annealed search over roughly 15\,000 instances, deliberately maximising greedy's shortfall, never got past $1.83$ --- so the worst case needs adversarial ties, and typical instances are far better (Section 6.3: 2.6\% above optimal on RAG-shaped inputs).

\textbf{Scope of the guarantee.} This is a guarantee on \emph{savings}, which is the operationally meaningful quantity --- how much prefill you actually eliminated. It does not transfer to the cost: from $\mathrm{SAV}(H_g) \ge \mathrm{SAV}^*/2$ one only gets $\mathrm{COST}(H_g) \le \tfrac12(\mathrm{COST}^* + \sum_i w(R_i))$, and since $\sum_i w(R_i)/\mathrm{COST}^*$ is unbounded, whether MPT itself is constant-factor approximable stays open.

\textbf{A computable lower bound.} Every internal node satisfies $w(I(v)) \le w(R_i \cap R_j)$ for any $i,j$ drawn from its two subtrees, and one such cross pair per internal node forms a spanning tree on the requests. Hence $\mathrm{SAV}^* \le \mathrm{MaxSpanningTree}(w(R_i \cap R_j))$, so $\mathrm{COST}^* \ge \sum_i w(R_i) - \mathrm{MaxSpanningTree}$ --- computable in polynomial time, and the only optimality yardstick available once $m$ outgrows the $O(3^m)$ solver. It is loose (mean slack $1.45\times$ on random instances, and $\Omega(m)$ in the worst case), but it is a bound.

\section{Evaluation on Real Retrieval Traces}

Everything above is about a combinatorial object. This section asks whether the object shows up in real traffic, using BM25 retrieval over BEIR corpora, real passages as chunks, and real \texttt{cl100k\_base} token counts. A request is the set of passages retrieved for one query; cost is total prefill tokens. All numbers are reproducible from \texttt{code/phase2.py} in the repository.

\textbf{Overlap in real retrieval.} On NFCorpus (3\,633 passages, 2\,921 queries, $k=8$) the retrieved sets cover 2\,857 distinct passages across 8\,472 chunk-slots --- a $2.97\times$ dedup factor, with 68.7\% of retrieved passages wanted by more than one query and a mean $|S_x|$ of 2.76. Real retrieval overlaps. The question is whether the overlap becomes \emph{prefix} overlap.

\textbf{Conversion of overlap into prefix reuse.} Production RAG concatenates passages in relevance order, which is a \emph{per-request} order, not a shared one. On NFCorpus it recovers only 6\% of prefill as cache hits. A single global convention --- sort every request's passages by corpus-wide popularity --- does substantially better at 13\%, which is already the paper's thesis in miniature: \textbf{uncoordinated per-request freedom is worse than a convention.} Coordination, not freedom, is what buys reuse.

Across three corpora and three retrieval depths (fraction of the no-sharing token bill; lower is better):

\begin{center}\small
\resizebox{\textwidth}{!}{%
\begin{tabular}{lrrrrrrrrr}\toprule
method & NFC $k$=5 & NFC 8 & NFC 16 & SciFact 5 & SciFact 8 & SciFact 16 & FiQA 5 & FiQA 8 & FiQA 16 \\ \midrule
relevance order (production) & .844 & .885 & .924 & .818 & .867 & .918 & .967 & .979 & .990 \\
global order by doc id & .789 & .831 & .874 & .792 & .837 & .900 & .936 & .949 & .966 \\
global order by popularity & .723 & .775 & .832 & .714 & .778 & .850 & .842 & .869 & .906 \\
online greedy longest-prefix & .703 & .749 & .800 & .695 & .752 & .831 & .834 & .861 & .898 \\
level-wise greedy hitting set & .641 & .661 & .692 & .652 & .690 & .732 & .799 & .814 & .848 \\
\textbf{agglomerative (ours)} & \textbf{.607} & \textbf{.605} & \textbf{.591} & \textbf{.625} & \textbf{.640} & \textbf{.643} & \textbf{.786} & \textbf{.798} & \textbf{.818} \\
spanning-tree lower bound & .490 & .488 & .465 & .526 & .546 & .553 & .660 & .687 & .713 \\
\textbf{savings vs production} & \textbf{$-$28.0\%} & \textbf{$-$31.6\%} & \textbf{$-$36.1\%} & \textbf{$-$23.6\%} & \textbf{$-$26.2\%} & \textbf{$-$29.9\%} & \textbf{$-$18.7\%} & \textbf{$-$18.5\%} & \textbf{$-$17.3\%} \\
\bottomrule\end{tabular}}\end{center}

Three observations follow from this table.

\textbf{Production's prefix reuse gets \emph{worse} as you retrieve more.} Relevance order goes .844 $\to$ .885 $\to$ .924 as $k$ grows from 5 to 16 on NFCorpus, and .967 $\to$ .990 on FiQA: at $k=16$ over a large corpus, prefix caching has essentially stopped working. Ours goes the other way, .607 $\to$ .605 $\to$ .591. The mechanism is exactly the one Theorem 5 is about --- set overlap grows with $k$, but the probability that two requests agree on an entire \emph{ordered} prefix collapses faster. This is the theory's prediction showing up in a log file: the adaptive-versus-global gap widens with request size, and we measure $-$28\%, $-$32\%, $-$36\% at $k$ = 5, 8, 16.

\textbf{The size of the win tracks retrieval concentration, and we report the unflattering end.} NFCorpus retrieves 2\,921 queries into 3\,633 passages ($6.7\times$ dedup at $k=8$, 92\% of passages wanted more than once) and gives $-$32\%. FiQA retrieves 2\,500 queries into 57\,638 passages ($1.8\times$ dedup, 38\% shared) and gives $-$18.5\%. So a big sparse corpus halves the benefit --- and 18\% of prefill is still worth having.

\textbf{Distance from the optimum.} Sampling real overlapping request groups (all requests retrieving a given popular passage, capped at 13) and solving them exactly with the $O(3^m)$ algorithm: agglomerative greedy is \textbf{0.05\% above optimum on average on NFCorpus} (worst group 2.46\%) and \textbf{0.00\% on SciFact} (worst 0.16\%). Theorem 6 guarantees a factor 2 in the worst case; on real retrieval structure greedy is essentially exact, and the remaining distance to the spanning-tree bound is the bound's looseness, not the heuristic's.

\subsection{Quality-Preserving Constraints}

It might: moving a passage changes what the model attends to, and the "lost in the middle" effect says position matters. We did not run a quality evaluation, so instead we priced the constraint that removes the risk. Pin the top-$t$ most relevant passages, in relevance order, at the \textbf{end} of the prompt --- nearest the question, which is where practitioners put them --- and free-permute only the rest. This is also the cache-friendly shape: shared prefix, personalised suffix.

\begin{center}\small
\begin{tabular}{lrrrr}\toprule
savings vs production & free & pin top-1 & pin top-2 & pin top-3 \\ \midrule
NFCorpus, $k=8$, 2\,921 queries & $-$31.6\% & $-$23.4\% & $-$17.3\% & $-$12.9\% \\
SciFact, $k=8$, 1\,109 queries & $-$26.2\% & $-$19.4\% & $-$13.2\% & $-$7.9\% \\
\bottomrule\end{tabular}\end{center}

So the win survives the constraint: even pinning the three most relevant passages in place leaves 8--13\% of prefill on the table, and pinning only the single most relevant leaves 19--23\%. A quality evaluation would still be worth running, but the design does not depend on its outcome.

\subsection{Online Layout with Bounded Lookahead}

Requests arrive one at a time. We keep a persistent radix trie, place each arrival at its deepest usable prefix, and lay out the residuals adaptively inside each batch of $W$ arrivals. Savings against production (relevance order) and against the best global order:

\begin{center}\small
\resizebox{\textwidth}{!}{%
\begin{tabular}{lrrrrrr}\toprule
lookahead $W$ & 1 & 8 & 32 & 128 & 512 & offline \\ \midrule
NFCorpus vs production & $-$17.6\% & $-$17.6\% & $-$18.2\% & $-$18.5\% & $-$20.7\% & \textbf{$-$31.6\%} \\
NFCorpus vs best global order & $-$5.9\% & $-$6.0\% & $-$6.6\% & $-$7.0\% & $-$9.5\% & \textbf{$-$21.9\%} \\
SciFact vs production & $-$14.8\% & $-$13.3\% & $-$15.4\% & $-$16.7\% & $-$20.3\% & \textbf{$-$26.2\%} \\
\bottomrule\end{tabular}}\end{center}

The decomposition is clean. Most of the win --- 15--18\% against production --- needs no lookahead at all: it is what you get from abandoning relevance order in favour of deepest-match placement against the live trie, which is roughly what CacheWeaver already does. The \emph{adaptive} part, the thing Theorem 1 is about, adds another 5--10 points as the window grows past a few hundred and roughly doubles offline.

\textbf{Comparison with a synthetic workload.} An earlier version of this evaluation used a synthetic Zipf-over-topics generator and found the online curve nearly flat, concluding that adaptivity was essentially unavailable online. On real traces that is wrong: the curve rises monotonically and even $W=1$ is far above production. The synthetic generator manufactured the difficulty by giving every topic an interchangeable pool, so there was no stable structure for a persistent trie to latch onto. We report this because it is the clearest evidence we have that synthetic workloads are not a substitute here.

What survives from the earlier reading is the interpretation of the \emph{shape}: adaptivity is bought with lookahead. A global order is a convention, and conventions are what let requests that never meet still agree; below some window size you pay the coordination cost without buying the coordination. Real traffic simply has far more exploitable structure than our generator did, so the threshold sits much lower.

\section{Finite Cache and Scheduling}

Theorem 1 assumes the cache holds the whole trie. Real caches evict, which couples the layout with the eviction policy and the arrival order. We close that gap here, simulating the radix cache the way SGLang and vLLM actually behave: a cached node implies its ancestors are cached, and eviction removes least-recently-used \emph{leaves} (\texttt{code/cache.py}).

\textbf{Under the real arrival order, the advantage is capacity-gated.} On NFCorpus ($k=8$, 2\,921 queries, one request $\approx$ 6\,036 tokens), our layout's margin over production is $-$3.0\% at a one-request cache, $-$11.3\% at $64\times$, $-$23.5\% at $256\times$, and the full $-$31.7\% only past $1024\times$. Below roughly $16\times$ every method collapses toward the no-sharing bill: at that capacity prefix caching itself has stopped working, and no layout can rescue it. The ranking, at least, never inverts --- we had expected the top-heavy greedy hitting set to survive eviction better than a balanced hierarchy, and it does not.

\textbf{The hierarchy also determines a schedule, which removes the capacity constraint.}

\begin{theorem}
Serve requests in the DFS order of the trie --- equivalently, sort the stream lexicographically by the laid-out sequence. Then a radix cache with LRU leaf eviction and capacity $\max_i w(R_i)$ --- \emph{one request's context} --- incurs exactly the unbounded-cache cost, for any layout.
\end{theorem}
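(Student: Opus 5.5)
The plan is to show that, in DFS order, every trie node is computed exactly once and is never evicted while some later request still needs it. The unbounded-cache cost is the trie cost: each node is computed once. Any finite cache pays at least that, since every node must be computed at least once. So it suffices to show that each node $v$ is computed at most once. Concretely: once $v$ is inserted while serving some request, it stays resident until the last request passing through $v$ has been served. The key structural fact is that in DFS (lexicographic) order the requests passing through any node $v$ form a \emph{contiguous} run $i_1<\dots<i_s$ of the stream.

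I would establish the following invariant. While request $i$ is being served, the cache contains the path of $i$. Between requests, the cache contains only nodes on the path of the request just served, plus possibly some already-dead nodes.

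\begin{itemize}
\item \emph{Capacity.} The path of $i$ weighs $w(R_i)\le \max_j w(R_j)$, so it fits in the cache.
\item \emph{Moving from request $i$ to $i+1$.} Let $u$ be the deepest common ancestor of the two paths. The prefix up to $u$ is already cached, since it lies on $i$'s path. Hits on these nodes refresh their recency.
\item \emph{Evictions during the insertion.} To insert the new suffix below $u$, LRU leaf eviction may need room. The only leaves available are the nodes of $i$'s private tail below $u$, together with any older dead nodes.
\item \emph{Why nothing live is lost.} By contiguity, no later request passes through those tail nodes: their run ended at $i$. The nodes from the root to $u$ were touched most recently, are ancestors of the new insertion, and are not leaves. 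Hence leaf eviction removes only dead tail nodes, deepest first. This is consistent with radix-cache semantics, where descendants are evicted before ancestors.
\item \emph{Enough room.} After the dead nodes are removed, the occupied weight is $w(\text{path to }u)$ plus the new suffix, which equals $w(R_{i+1})\le$ capacity.
\end{itemize}

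The main obstacle is making the LRU argument rigorous. Two points need care. First, the eviction order must never select an ancestor on the live prefix. This requires that a node on the current path is never a leaf while its descendants are being inserted, and that recency is refreshed on hits. Second, a node with several children that is still live for later requests must survive. Contiguity handles this: if $v$ is live for later requests, then $v$ lies on the path of $i+1$. So $v$ is either at or above $u$, and is therefore protected. Dead leaves always have strictly older timestamps than the path just refreshed, so LRU picks them first.

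I would check the edge case of equal timestamps within the insertion of a single request, where ties are resolved by depth. I would also argue that the bound is for any layout, because the argument used only the trie and its DFS order, never the hierarchy from Theorem 1.
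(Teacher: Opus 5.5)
Your proof is correct and follows the paper's argument. Contiguity of the requests below each trie node in DFS order makes every cached node off the current request's path dead, the live path weighs at most $\max_i w(R_i)$, and because that path was just touched, LRU evicts only dead nodes. Your step-by-step invariant from request $i$ to $i+1$ just spells out in more detail what the paper states as a single snapshot of the live set at the moment request $i$ is served.
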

\begin{proof}
In DFS order the requests below any trie node $v$ are served as one contiguous block, so $v$ is created at the start of its block and never needed after its end. Consider the moment request $i$ is served. A cached node not on $i$'s root-to-leaf path either belongs to a block that has already finished --- never needed again --- or to one that has not started, in which case it does not exist yet. So the live set is exactly $i$'s root path, of weight at most $\max_i w(R_i)$. Serving $i$ touches that whole path, making it strictly the most recent, so LRU evicts only dead nodes. 
\end{proof}

The simulation confirms it and shows the bound is close to tight: at $0.90\times$ one request the cost is already exactly the unbounded-cache optimum, $0.75\times$ costs $+0.01\%$, $0.50\times$ costs $+0.8\%$, and only at $0.25\times$ does it break down ($+18\%$). Every layout in our comparison --- production's relevance order included --- hits its own unbounded-cache number at $1\times$.

\subsection{Substitution Between Capacity and Reorder Window}

Reordering the whole stream is available offline; online you get a window. Sorting inside a window of $W$ arrivals, excess over the unbounded-cache optimum (NFCorpus, $k=8$):

\begin{center}\small
\resizebox{\textwidth}{!}{%
\begin{tabular}{lrrrrrr}\toprule
lookahead $W$ vs. cache & 1x & 4x & 16x & 64x & 256x & 1024x \\ \midrule
1 & 58\% & 57\% & 54\% & 42\% & 19\% & \textbf{0\%} \\
128 & 52\% & 52\% & 52\% & 40\% & 19\% & \textbf{0\%} \\
512 & 38\% & 38\% & 38\% & 38\% & 18\% & \textbf{0\%} \\
2048 & 14\% & 14\% & 14\% & 14\% & 14\% & \textbf{0\%} \\
all & \textbf{0\%} & \textbf{0\%} & \textbf{0\%} & \textbf{0\%} & \textbf{0\%} & \textbf{0\%} \\
\bottomrule\end{tabular}}\end{center}

(Production in stream order at a $4\times$ cache sits at $+63\%$ for reference.)

The table is L-shaped: the excess is essentially $\min$ of a function of $W$ and a function of the capacity. \textbf{Cache capacity and reorder window are substitutes --- you need one of them, not both.} A stack with a large prefix cache can ignore ordering entirely; a stack with a one-request cache and a deep reorder queue gets the identical number. Paying for both buys nothing.

\textbf{Relation to existing schedulers.} SGLang already ships cache-aware scheduling with two policies: LPM, which prioritises the waiting request with the longest prefix match, and \texttt{DFS\_WEIGHT}, which groups waiting requests by their position in the cache tree \cite{SGL}. \texttt{DFS\_WEIGHT} \emph{is} a heuristic DFS traversal. What Theorem 7 adds is the exact statement of what that heuristic is approximating and what it is worth: the target is the DFS order of the trie, and hitting it makes the required cache capacity collapse to a single request's context --- not merely better, but exactly the unbounded-cache cost, for any layout. The practical consequence is the substitution above, which we have not seen stated: a deployment does not need to buy both a large prefix cache and a deep reorder queue.

The two ideas compose. Our hierarchy supplies the layout, its DFS traversal supplies the schedule, and both come out of the same tree --- so a scheduler already doing \texttt{DFS\_WEIGHT} is running half of this paper, against a layout chosen by something else.

\section{Related Work}

\textbf{Prefix caching in serving systems.} vLLM's PagedAttention \cite{KWO23} made KV cache a paged resource and automatic prefix caching a standard feature; SGLang's RadixAttention \cite{SGL} keeps prompt and generation KV in a radix tree with LRU eviction and adds cache-aware scheduling --- LPM, and the \texttt{DFS\_WEIGHT} policy discussed in Section 8.1. Both take the prompt as given. The work closest to ours in motivation is CacheWeaver \cite{CW26}, which states the problem exactly --- "set overlap does not become reusable prefix overlap" --- and resolves it with a greedy walk over a prefix tree, explicitly declining a combinatorial search. A second line attacks the same loss by weakening the exact-prefix rule instead of by ordering: Prompt Cache \cite{GIM24} marks up position-independent segments, CacheBlend \cite{CB24} fuses KV of non-prefix chunks with selective recomputation, CacheClip \cite{CC25} restores inter-chunk dependencies with an auxiliary model, and Cache-Craft \cite{CCR25} and PCR \cite{PCR26} manage per-chunk caches for RAG. None of these states a hardness result, an approximation guarantee, or an adaptive-versus-global separation, and none optimises a global objective over a request population.

\textbf{Sort orders in query optimisation.} The closest prior \emph{formulation} we know is Guravannavar, Sudarshan, Diwan and Sobhan Babu \cite{GSD06}, who choose sort orders for operators so that orders match across the edges of a query plan, prove a special case NP-hard, and give a 2-approximation for a join tree. The difference is precisely the content of Theorem 1: there the sharing structure --- the plan tree --- is given and only the orders are chosen; here the sharing structure \emph{is} the decision variable and the orders fall out of it. Factorised databases \cite{OZ15} make the same ordered-versus-free distinction under the name of variable orders and d-trees, where a branch-dependent order can be much smaller than any global one. Trie-size minimisation under a single global attribute permutation is classical and NP-hard, including for double-array tries \cite{KAN24}; per-record freedom is what the RAG setting hands you and what makes MPT a different object.

\textbf{Hierarchical clustering.} In savings form the objective is a sum, over internal nodes of a binary hierarchy, of a monotone-decreasing \emph{set} function of that node's leaves. This puts it outside the Dasgupta family \cite{DAS16} and its relatives \cite{CC17,MW17,CKM19}, which sum \emph{pairwise} similarities weighted by a function of cluster cardinalities --- indeed outside the admissibility characterisations for such objectives, which require the scaling to depend on the children's cardinalities. We would like to know whether the sparsest-cut machinery developed there transfers; Theorem 6's proof does not use it, and the $\tfrac12$ we get is tight for a different reason (the objective degenerates to maximum matching).

\textbf{Decision diagrams.} $\mathrm{OPT}_{\mathrm{glob}}$ versus $\mathrm{OPT}_{\mathrm{free}}$ is the ordered-versus-free distinction of branching programs \cite{BRY86,WEG00}: an OBDD fixes one variable order, an FBDD may use a different order on each path, and FBDDs can be exponentially smaller. Minimising either is hard \cite{SIE02}. Our Theorem 5 is an unbounded separation of the same shape in a setting where the tree is a clustering rather than a branching program; Open Problem 1 asks whether the known exponential separations transfer, which would settle the rate at $\Theta(k)$.

\textbf{Segment trees and consecutive ones.} $t_x(H)$ is the canonical-decomposition size that makes a range query touch $O(\log n)$ segment-tree nodes. MPT asks for the tree minimising total decomposition size over a \emph{given family} of sets; we have not found that problem studied. When the family has the consecutive-ones property a suitable leaf order is found by PQ-trees \cite{BL76}, giving the (loose) bound of Section 3.1.

\textbf{Evaluation.} Retrieval traces come from BEIR \cite{THA21} --- NFCorpus \cite{BOT16}, SciFact \cite{WAD20} and FiQA \cite{MAI18} --- with BM25 retrieval and \texttt{cl100k\_base} token counts. The quality concern that motivates Section 7.1 is the position sensitivity documented by Liu et al. \cite{LIU24}.

\section{Conclusion}

Prefix caching rewards requests that agree on a token prefix, but a growing share of production traffic is assembled from sets of reusable fragments whose order is chosen by the serving stack rather than by the user. This paper treats that choice as an optimisation problem and characterises it exactly. The minimum prefill cost is the minimum, over binary hierarchies on the request population, of the total canonical-decomposition size of the sets of requests needing each chunk; the per-request permutations are determined by that hierarchy and are not independent degrees of freedom. The characterisation supplies an exact algorithm, places the two-chunk case in correspondence with minimum vertex cover, and identifies the extremal behaviour as minimum external path length, so the cost of insisting on one global chunk order grows without bound.

The practical consequences follow from the same object. Agglomerative clustering by common intersection is a tight one-half approximation for the achievable saving and is within 0.05\% of the exact optimum on real retrieval structure; it reduces prefill by 17--36\% over the ordering production RAG systems currently use, with 13--23\% surviving a constraint that pins the most relevant passages in place. The hierarchy also serves as a schedule, and traversing it depth-first reduces the cache capacity needed to reach the unbounded-cache optimum to a single request's context, which makes cache capacity and reorder window interchangeable resources.

Two limitations bound these claims. The evaluation measures prefill tokens rather than end-to-end latency on a serving stack, so the translation to time-to-first-token depends on hardware and batch composition that we do not model. And the quality question is priced rather than measured: we report what the saving costs under a constraint that preserves relevance order, but we do not evaluate answer quality directly.

\subsection{Open Problems}

Theorem 1 restates each of the following as a question about hierarchies rather than about permutations, which is what makes them tractable to attack.

\begin{enumerate}
\item \textbf{How much does a fixed priority cost?} $\mathrm{OPT}_{\mathrm{glob}}$ is the same minimisation restricted to hierarchies induced by a fixed chunk priority. We have $\mathrm{OPT}_{\mathrm{glob}}/\mathrm{OPT}_{\mathrm{free}} = \Omega(k/\log k)$ from $L_n$ and a trivial $O(k)$ upper bound. Search over small universes says $L_n$ is \textbf{not} extremal: the best ratios we have found by annealed search are $1.20$ at $k=3$ and $\mathbf{1.50}$ at $k=4$ ($\{0145,0236,1236,1245,1345,1456,2346,2356\}$ on 7 chunks: $14$ versus $21$), against $L_5$'s $1.167$ and $L_6$'s $1.25$ at comparable sizes. At $k=4$ the trivial upper bound is $4$ and $k/\log_2 k = 2$, so $1.50$ does not yet separate the two candidate rates, but it does say the leave-one-out family understates the gap by a wide margin. Is the truth $\Theta(k)$ or $\Theta(k/\log k)$? This is the ordered-versus-free decision diagram question in a setting where the tree is a clustering rather than a branching program, and we do not know whether the known exponential OBDD/FBDD separations transfer.
\item \textbf{Approximation of the cost.} Theorem 6 settles the savings side: greedy is a tight $\tfrac12$-approximation for $\max_H \sum_v w(I(v))$. The cost side does not follow, because $\sum_i w(R_i)/\mathrm{OPT}$ is unbounded, and we do not know whether MPT itself admits a constant factor --- or whether it is APX-hard. Note the $k=2$ case is \emph{easy} to approximate (the cost is $\tau(G)+m \in [m,2m]$), so hardness of approximation, if it holds, must come from larger requests. The related sort-order problem on a \emph{given} join tree admits a 2-approximation \cite{GSD06}, which we read as mild encouragement.
\item \textbf{Online with lookahead.} Fix a reorder window $W$. What is the competitive ratio against the offline optimum as a function of $W$? Section 7.2 measures a curve that rises monotonically and is still climbing at $W = 512$, and Section 8.1 shows $W$ trading off against cache capacity along an L-shaped frontier. We have no theory for either shape. Our guess is that $\Theta(\log W)$ of the $\log$ factor in Theorem 5 is recoverable and no more.
\item \textbf{Beyond exact prefixes.} CacheBlend \cite{CB24} and CacheClip \cite{CC25} trade accuracy for position-independent reuse. There the object is not a trie and Theorem 1 does not apply. Is there a formulation in which the accuracy loss and the layout cost are traded off in one optimisation?
\end{enumerate}

\section{Appendix A. Reproducibility}

All code is at <https://github.com/Darrenus/prefix-sharing-is-sorting>. \texttt{get\_data.sh} downloads the BEIR corpora; \texttt{tiktoken} supplies the token counts.

{\footnotesize\begin{verbatim}
python3 code/t2.py        # minimal counterexample search (Theorem 3)
python3 code/exact2.py    # exact solver vs merge-sort formula (Theorem 4)
python3 code/verify.py    # closed form to n=4000, separation table
python3 code/sim.py       # algorithm comparison, adversarial + synthetic RAG
python3 code/online2.py   # batch-level online variant
python3 code/worst.py     # worst-ratio search (open problem 1)
python3 code/engine.py    # THE STRUCTURE THEOREM: O(3^m) exact solver, L_n check
python3 code/lam.py       # laminar relaxation, cross-checked per-chunk
python3 code/tight.py     # LAM == OPT_free on 1500 random instances
python3 code/tight2.py    # ... weighted, structured, and k=2 sanity checks
python3 code/eval2.py     # greedy vs the TRUE optimum on RAG-shaped instances
python3 code/hill.py      # annealed search for the worst global/free ratio
python3 code/approx.py    # spanning-tree upper bound; greedy shortfall search
python3 code/anneal_greedy.py  # annealed search for greedy's worst case (never past 1.83)
python3 code/halfproof.py # verifies Theorem 6 AND its per-threshold engine A(t) <= 2G(t)
python3 code/phase2.py    # real BM25 retrieval traces, real token counts
python3 code/sweep.py     # 3 corpora x 3 depths; greedy vs exact on real request groups
python3 code/tiered.py    # quality-preserving variant (pin the most relevant at the end)
python3 code/online_real.py  # lookahead sweep on the real stream
python3 code/phase3.py    # finite radix cache, real arrival order
python3 code/phase3b.py   # Theorem 7: DFS order + one-request cache
python3 code/phase3d.py   # the capacity / reorder-window substitution frontier
\end{verbatim}}

\end{document}